\documentclass[12pt]{article}
\pagestyle{plain}
\usepackage{amsfonts}
\usepackage{amsmath}
\usepackage{amssymb}
\usepackage{amsthm}
\usepackage{geometry}
\usepackage{rotating}
\usepackage{multirow}
\geometry{a4paper}
\usepackage[utf8]{inputenc}
\usepackage{array}
\usepackage[english]{babel}
\usepackage{float}
\usepackage{epstopdf}

\usepackage{bm}
\usepackage[numbers]{natbib}


\newcommand{\bfm}[1]   {\mbox{\boldmath{${#1}$}}}

\theoremstyle{plain} \newtheorem{lemma}{Lemma}
\theoremstyle{plain} 
\theoremstyle{plain} \newtheorem{definition}{Definition}
\theoremstyle{definition} \newtheorem*{algorithm}{Algorithm}


\begin{document}
\title{Coverage-adjusted confidence intervals for a binomial proportion}

\author
{ {M{\aa}ns Thulin}$^{1}$}
\date{}

\maketitle

\footnotetext[1]{Department of Mathematics, Uppsala University, Box 480, 751 06 Uppsala, Sweden.\\Phone: +46(0)184713389; E-mail: thulin@math.uu.se}

\begin{abstract}
\noindent We consider the classic problem of interval estimation of a proportion $p$ based on binomial sampling. The ''exact'' Clopper-Pearson confidence interval for $p$ is known to be unnecessarily conservative. We propose coverage-adjustments of the Clopper-Pearson interval using prior and posterior distributions of $p$. The adjusted intervals have improved coverage and are often shorter than competing intervals found in the literature. Using new heatmap-type plots for comparing confidence intervals, we find that the coverage-adjusted intervals are particularly suitable for $p$ close to 0 or 1.
\noindent 
   \\ {\bf Keywords:} Binomial distribution; Confidence interval; Proportion.
\end{abstract}

\section{Introduction}\label{introduction}
Constructing a confidence interval for a proportion $p$ based on a binomial sample is a basic but important problem in statistics. Due to the discreteness of the binomial distribution, it is not possible to construct confidence intervals with exact coverage. Thus an interval based on normal approximation, known as the Wald interval, is taught in virtually every introductory statistics course. The interval is $\hat{p}\pm z_{\alpha/2}\sqrt{\hat{p}\hat{q}/n}$, where $\hat{p}=X/n$ is the sample proportion, $\hat{q}=1-\hat{p}$ and $z_{\alpha/2}$ is the $100(1-\alpha/2)$th percentile of the standard normal distribution.

Numerous authors have remarked on the surprisingly poor performance of the Wald interval. Errors in the approximation due to discreteness and skewness (for small $p$) can have significant impact on the coverage of the interval even for large $n$. In recent years, its weaknesses have been thoroughly investigated in comparisons of confidence intervals for $p$. \citet{bcd1,bcd2} gave examples of the erradic behaviour of the Wald interval, compared several intervals in terms of coverage and expected length and obtained general asymptotic results using Edgeworth expansions. \citet{pa1} compared twenty methods using different criteria. For recent developments and discussions, see for instance \citep{ck1,gm1,hu1,kp1,nn1,ne1}.

A natural alternative to the Wald interval is the Clopper-Pearson interval \citep{cp1}. It is based on the inversion of the equal-tailed binomial test and hence the interval contains all values of $p$ that aren't rejected by the test at confidence level $\alpha$. The lower limit is thus given by the value of $p_L$ such that
\begin{equation}\label{cp1}
\sum_{k=X}^n\binom{n}{k}p_L^k(1-p_L)^{n-k}=\alpha/2
\end{equation}
and the upper limit is given by the $p_U$ such that
\begin{equation}\label{cp2}
\sum_{k=0}^X\binom{n}{k}p_U^k(1-p_U)^{n-k}=\alpha/2.
\end{equation}
The computation of $p_L$ and $p_U$ is simplified by the following equality from \citet{kotz1}:
\[
\sum_{k=X}^n\binom{n}{k}p^k(1-p)^{n-k}=\int_0^pf(t,X,n-X+1)dt.
\]
where $f(t,r,s)$ is the density function of a $Beta(r,s)$ random variable. Consequently, the endpoints of the Clopper-Pearson interval $I_{CP}=(p_L,p_U)$ are beta quantiles:
\[
I_{CP}=\Big{(}B(\alpha/2,X,n-X+1),\quad B(1-\alpha/2,X+1,n-X)\Big{)}.
\]
$I_{CP}$ is exact in the sense that the minimum coverage over all $p$ is at least $1-\alpha$. For most values of $p$ however, especially values close to 0 or 1, it is far too conservative, giving a coverage that is much larger than the nominal coverage.

As several authors have pointed out \citep{ac1,bcd1,nn1} it is often more natural to study the \emph{mean} coverage rather than the minimum coverage. 
In this paper we construct coverage-adjusted Clopper-Pearson intervals with the mean coverage in mind, combining Bayesian and frequentist reasoning. The intervals are adjusted to have mean coverage $1-\alpha$ with respect to either a prior or a posterior distribution of $p$. The corrected intervals are seen to have several desirable properties in the frequentist setting.

A class of coverage-adjusted Clopper-Pearson intervals is introduced in Section \ref{coverage}. In Section \ref{comparisons} these intervals are compared to other popular intervals and new heatmap-style plots for comparing confidence intervals are introduced. The text concludes with a discussion in Section \ref{discussion} and an appendix with proofs, tables and several figures.


\section{Coverage-adjusted Clopper-Pearson intervals}\label{coverage}
\subsection{Definition}

As has already been mentioned, $I_{CP}$ is often unnecessarily conservative. This is illustrated in Figure \ref{cpcov}. It is clear from the figure that if we are willing to accept an interval which has a coverage less than $1-\alpha$ for \emph{some} values of $p$, the performance of $I_{CP}$ can be improved by choosing a larger $\alpha$, in which case the actual coverage would be closer to the desired coverage for \emph{most} values of $p$. The question, then, is how to choose the new $\alpha$. We propose that $\alpha'$ should be chosen to satisfy a mean coverage criterion.

\begin{figure}
\begin{center}
   \includegraphics[width=\textwidth]{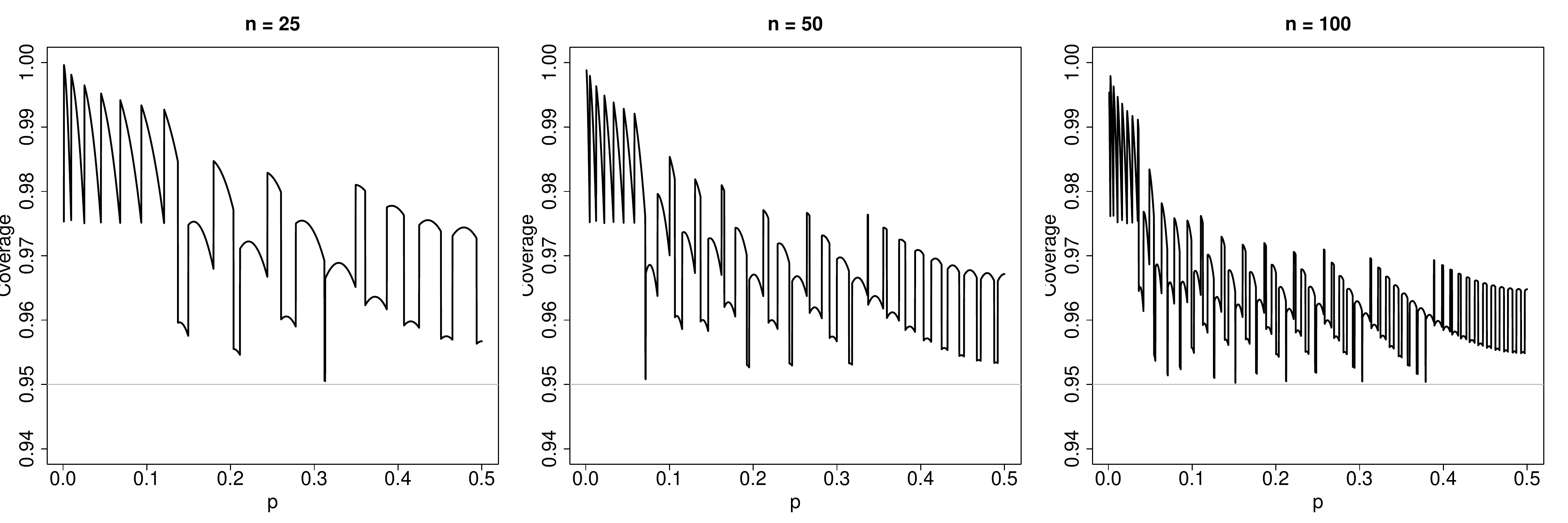}
   \caption{Actual coverage of the nominal 95 \% Clopper-Pearson interval.}\label{cpcov}
\end{center}
\end{figure}

\begin{definition}
Let $f(\cdot)$ be a density function on $(0,1)$. A mean coverage corrected $1-\alpha$ Clopper-Pearson interval $I_{GCP}=(p_L,p_U)$ is given by the unique solution to
\[\begin{split}
&\sum_{k=X}^n\binom{n}{k}p_L^k(1-p_L)^{n-k}=\alpha'/2,\\
&\sum_{k=0}^X\binom{n}{k}p_U^k(1-p_U)^{n-k}=\alpha'/2
\end{split}\]
where $\alpha'$ satisifies
\begin{equation}\label{condition}\begin{split}
C(\alpha',n)&=\int_0^1P(p\in I_{CP})\cdot f(p)dp\\
&=\int_0^1 \sum_{X=0}^n\bfm{1}(p\in I_{CP}(X,\alpha'))\binom{n}{X}p^X(1-p)^{n-X}\cdot f(p)dp=1-\alpha,
\end{split}\end{equation}
i.e. $\alpha'$ is such that the mean coverage of $I_{GCP}$ with respect to $f$ is $1-\alpha$.
\end{definition}
Note that this simply is the ordinary $1-\alpha'$ Clopper-Pearson interval, with $\alpha'$ chosen so that the mean coverage is $1-\alpha$. What differs is that $\alpha'$ needs to be determined before the endpoints are computed.

It should be pointed out that the adjusted intervals inherit important properties from $I_{CP}$. They are fully boundary-respecting, so that $I_{GCP}\subseteq (0,1)$, and equivariant in the sense of \cite{bs1}, meaning that the corresponding interval for $1-p$ is $(1-p_U,1-p_L)$. Furthermore, they have very favourable location properties in terms of the Box-Cox index of symmetry and balance of mesial and distal non-coverage, as described by \citet{ne1}. Finally, the minimum coverage over all $p$ is guaranteed to be at least $1-\alpha'$.

The choice of $f$ affects the performance of $I_{GCP}$ greatly. $f$ can be thought of as a weight function on $(0,1)$, used to put more weight on the performance for certain parts of the parameter space. In the following, we will refer to $f$ as being either a prior or posterior density, to show the connection between this weight function and Bayesian ideas.

\subsection{Prior mean coverage corrections}\label{prior}

The use of a prior distribution $f$ for coverage-adjustments can be motivated by the fact that in virtually all investigations, the experimenters will have some prior idea about how large $p$ is. In particular, it is often clear beforehand if $p$ is close to or far away from $1/2$.

$I_{CP}$ is symmetric in $p$ in the sense that the interval has the same properties for $p$ and $1-p$. For this reason, it is reasonable to use a symmetric prior for $p$. $Beta(r,r)$ priors, being conjugate priors of the binomial distribution, are a natural choice here. We divide the parameter space into three cases:

\emph{$p$ close to 0 or 1. } When $p$ is small, a prior with $r<1$ should be used, as such priors put more weight on the tails of the distribution. We will use the $Beta(1/2,1/2)$ prior in the following, but smaller $r$ can certainly be used. The coverage-adjustments will generally be larger for small $p$, as the overcoverage of $I_{CP}$ is largest in this part of the parameter space.

\emph{$p$ close to 1/4 or 3/4. } For medium-sized $p$, we wish to put approximately the same weight on the tails and the centre of the distribution. The uniform $Beta(1,1)$ prior is ideal for this. The resulting interval will however give a slight undercoverage for $p$ closer to 1/2, so if there is some worry that that $p$ may be above 0.40, say, a prior with $r$ slightly greater than 1 could be used. The interval constructed using the uniform prior seems to coincide with a corrected interval that was described informally by \citet{re1}.

\emph{$p$ close to 1/2. } If $p$ is believed to be closer to $1/2$, a prior with $r>1$ is recommendable. We will use the $Beta(2,2)$ prior. The coverage-adjustments will be smaller in this part of the parameter space, as $I_{CP}$ comes closest to attaining its nominal coverage around $p=1/2$.

\begin{figure}[H]
\begin{center}
   \includegraphics[width=\textwidth]{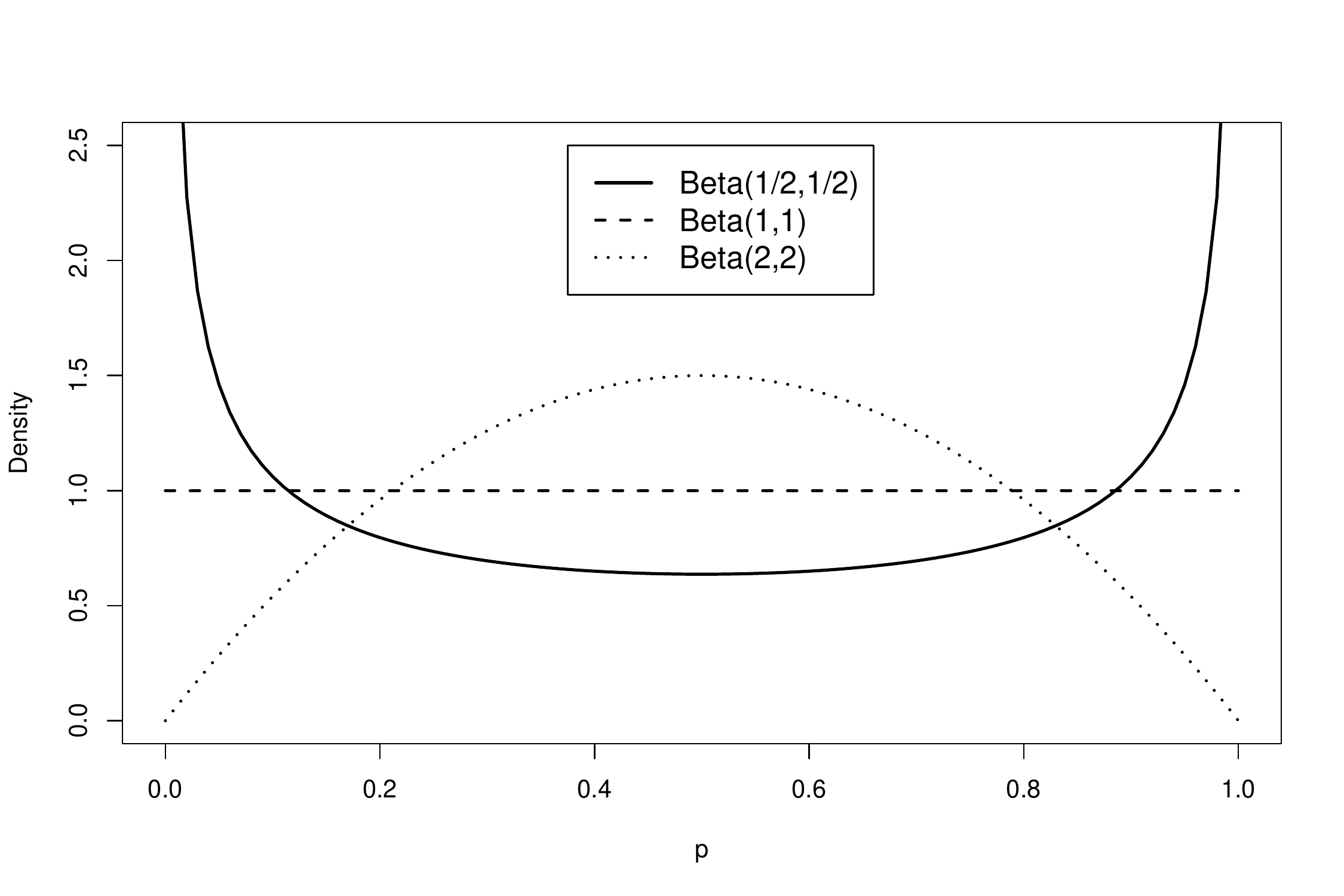}
   \caption{Three useful prior distributions for $p$.}\label{betapriors}
\end{center}
\end{figure}

\subsection{Posterior mean coverage corrections}\label{posterior}
Having used priors for coverage correction, it seems natural to consider using a posterior distribution of $p$ for coverage-adjustments, in order to get closer to the nominal coverage in areas of the parameters space that given the data are more likely to contain $p$.

With a $Beta(r,s)$ prior for $p$, the posterior distribution is $Beta(X+r,n-X+s)$ with density function
\[
f(p)=\frac{p^{X+r-1}(1-p)^{n-X+s-1}}{\beta(X+r,n-X+s)},\quad 0<p<1,
\]
where $\beta(\cdot,\cdot)$ is the beta function. Thus the posterior coverage corrected Clopper-Pearson interval $I_{GCP}$ is, given $X$, determined by the condition (\ref{condition}) with the function
\[
C(\alpha',n,X)=\int_0^1 \sum_{Y=0}^n\bfm{1}(p\in I_{CP}(Y,\alpha'))\binom{n}{Y}p^Y(1-p)^{n-Y}\cdot \frac{p^{X+r-1}(1-p)^{n-X+s-1}}{\beta(X+r,n-X+s)}dp.
\]
In the comparison later in the text, we will use the $Beta(1/2,1/2)$, $Beta(1,1)$ and $Beta(2,2)$ priors, with the same reasoning as in the previous section.

Conditioning the coverage-adjustments on the data may seem hazardous in a frequentist setting, but as we will demonstrate in Section \ref{comparisons}, this approach leads to short confidence intervals with good coverage properties.


\subsection{Determining the adjusted confidence level}
While $\alpha'$ can be approximated by using an asymptotic expansion for the coverage to solve the equation (\ref{condition}) approximately, it is more convenient to use a numerical method with exact coverages. The following lemma ensures that $C$ is continuous and decreasing in $\alpha'$. This guarantees that $\alpha'$ easily can be found numerically by using for instance bisection to solve the equation $C(\alpha',n,r,s)=1-\alpha$. The proof of the lemma is given in the Appendix, along with a table of $\alpha'$ for different choices of $\alpha$ and $n$ for $f(p)=1$.
\begin{lemma}\label{app1}
Let $I_{CP}(X,\alpha)=(p_L(X,\alpha),p_U(X,\alpha))$ be the $1-\alpha$ Clopper-Pearson interval and let $f(p,r,s)$, $0<p<1$, be the density of the $Beta(r,s)$ distribution. The mean coverage of $I_{CP}(X,\alpha)$ with respect to the density $f(p)$,
\[
C(\alpha,n,r,s)=\int_0^1 \sum_{X=0}^n\bfm{1}(p\in I_{CP}(X,\alpha))\binom{n}{X}p^X(1-p)^{n-X}f(p,r,s)dp,
\]
is continuous and strictly decreasing in $\alpha$.
\end{lemma}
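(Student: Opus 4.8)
The plan is to exploit the explicit beta-quantile form of the Clopper-Pearson endpoints and thereby decouple the coverage into a finite sum of manageable terms. Writing $b_X(p)=\binom{n}{X}p^X(1-p)^{n-X}f(p,r,s)$ and interchanging the finite sum with the integral, I would first record the identity
\[
C(\alpha,n,r,s)=\sum_{X=0}^n\int_{p_L(X,\alpha)}^{p_U(X,\alpha)}b_X(p)\,dp=:\sum_{X=0}^n T_X(\alpha),
\]
so that the claim reduces to statements about each term $T_X$. Everything then hinges on how the endpoints move with $\alpha$.

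The first substantive step is to establish that $\alpha\mapsto p_L(X,\alpha)$ is strictly increasing and $\alpha\mapsto p_U(X,\alpha)$ is strictly decreasing, with the boundary conventions $p_L(0,\alpha)=0$ and $p_U(n,\alpha)=1$. This follows from the representations $p_L(X,\alpha)=B(\alpha/2,X,n-X+1)$ and $p_U(X,\alpha)=B(1-\alpha/2,X+1,n-X)$: the relevant beta distributions are continuous and strictly increasing on $(0,1)$, so their quantile functions are continuous and strictly increasing, while $\alpha/2$ increases and $1-\alpha/2$ decreases as $\alpha$ grows. Hence the interval of integration $(p_L(X,\alpha),p_U(X,\alpha))$ shrinks from both sides, and since $b_X\ge 0$ each $T_X$ is non-increasing in $\alpha$; therefore so is $C=\sum_X T_X$.

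For the two remaining properties I would work with the $T_X$ individually. Continuity is cleanest via $F_X(t)=\int_0^t b_X(p)\,dp$, so that $T_X(\alpha)=F_X(p_U(X,\alpha))-F_X(p_L(X,\alpha))$; since a $Beta(r,s)$ density is integrable on $(0,1)$ even when $r<1$ or $s<1$, the function $b_X$ lies in $L^1$, $F_X$ is continuous on $[0,1]$, and composing with the continuous quantile functions makes each $T_X$, hence $C$, continuous in $\alpha$. To upgrade ''non-increasing'' to ''strictly decreasing'' I would isolate the term $T_n(\alpha)=F_n(1)-F_n(p_L(n,\alpha))$ with $F_n(t)=\int_0^t p^n f(p,r,s)\,dp$: for $0<\alpha<1$ the endpoint $p_L(n,\alpha)=B(\alpha/2,n,1)=(\alpha/2)^{1/n}$ sits strictly inside $(0,1)$ and is strictly increasing in $\alpha$, while $F_n$ is strictly increasing because $p^n f(p,r,s)>0$ on $(0,1)$ for all $r,s>0$. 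Thus $T_n$ is strictly decreasing, and since every other $T_X$ is non-increasing, $C$ must be strictly decreasing.

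The point requiring the most care is the $\alpha$-monotonicity of the beta quantiles together with the correct handling of the two boundary cases $X=0$ and $X=n$, where one endpoint is pinned at $0$ or $1$; once these are settled the rest is bookkeeping over the finite sum. The only genuinely analytic ingredient beyond that is the strict positivity of the weight $f$ on $(0,1)$, which is exactly what excludes the pathological possibility that the coverage stays flat over a range of confidence levels.
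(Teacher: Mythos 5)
Your proposal is correct and follows essentially the same route as the paper: interchange the finite sum with the integral, then use the continuity and strict monotonicity in $\alpha$ of the beta-quantile endpoints together with the nonnegativity of the integrand. You are in fact somewhat more careful than the paper's own proof, handling the pinned endpoints at $X=0$ and $X=n$ explicitly, using the $L^1$/antiderivative argument for continuity (which works for non-integer $r,s$, where the paper's ``polynomial'' remark is only heuristic), and supplying the strict-positivity argument via $T_n$ needed to upgrade ``non-increasing'' to ``strictly decreasing.''
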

The algorithm for finding $\alpha'$ using bisection is as follows.
\begin{algorithm}
Given a tolerance $tol$, $\alpha$, $n$ and a density $f$:
\begin{enumerate}
\item Start with an initial lower bound $\alpha'_{L,0}=\alpha$ and an upper bound $\alpha'_{U,0}$. The initial guess is $\alpha'_0=(\alpha'_{L,0}+\alpha'_{U,0})/2$.
\item Set $i=0$.
\item While $|1-\alpha-C(\alpha'_i,n)|>tol$:
\begin{itemize}
\item If $C(\alpha'_i,n)>1-\alpha$ then $\alpha'_{L,i+1}=\alpha'_i$, $\alpha'_{U,i+1}=\alpha'_{U,i}$ and $\alpha'_{i+1}=(\alpha'_{L,i+1}+\alpha'_{U,i+1})/2$.
\item Else $\alpha'_{U,i+1}=\alpha'_i$, $\alpha'_{L,i+1}=\alpha'_{L,i}$ and $\alpha'_{i+1}=(\alpha'_{L,i+1}+\alpha'_{U,i+1})/2$.
\item $i=i+1$.
\end{itemize}
\item $\alpha'=\alpha'_i$.
\end{enumerate}
\end{algorithm}
For the algorithm to converge, two conditions must be satisfied. First, $\alpha'\leq\alpha'_{U,0}$, i.e. $\alpha'$ must not exceed the upper bound. Second, $C(\alpha'_i,n)$ must be computed with sufficient precision ($tol$ determines what is sufficient).

Implementations of the above algorithm in R and MS Excel are available from the author.


\subsection{An example}
We illustrate the use of the coverage-adjustments with clinical data from an influenza vaccine study performed by \citet{he1}. $n=96$ fully vaccinated children younger than 2 years were included in the study. $X=4$ of these contracted influenza during the 2007-08 influenza season.

The $95 \%$ Clopper-Pearson interval for the proportion of vaccinated children younger than 2 years that will contract influenza is $( 0.012,0.103)$. Using a prior $Beta(1,1)$ correction, we get $\alpha'\approx 0.06967$. Letting $B(\cdot,r,s)$ be the quantile function of the $Beta(r,s)$ distribution, the coverage-adjusted confidence interval is 
\[
\Big{(}B(0.06967/2,4,93),~ B(1-0.06967/2,5,92)\Big{)}=( 0.013,0.098).
\]
Using a posterior $Beta(1/2,1/2)$ correction, $\alpha'\approx 0.09385$ and the interval is $( 0.014,0.094)$.

\section{Comparison of intervals}\label{comparisons}

\subsection{Other intervals}\label{othint}
Following the comparison performed by \citet{bcd1}, two confidence intervals for $p$ have emerged as being the intervals to which all other intervals should be compared. These are the Wilson and Jeffreys prior intervals, presented next.

\emph{The Wilson interval. } Like the Wald interval, the \citet{wi1} score interval is based on an inversion of the large sample normal test
\[
\Big{|}\frac{\hat{p}-p}{d(\hat{p})}\Big{|}\leq z_{\alpha/2},
\]
where $d(\hat{p})$ is the standard error of $\hat{p}$. Unlike the Wald interval, however, the inversion is obtained using the null standard error $(pq/n)^{1/2}$ instead of the sample standard error $(\hat{p}\hat{q}/n)^{1/2}$. The solution of the resulting quadratic equation leads to the confidence interval
\[
I_W=\frac{X+z_{\alpha/2}^2/2}{n+z_{\alpha/2}^2}\pm \frac{z_{\alpha/2}}{n+z_{\alpha/2}^2}\sqrt{\hat{p}\hat{q}n+z_{\alpha/2}^2/4}.
\]
$I_W$ typically has coverage close to the nominal coverage and comparatively short expected length. Indeed, it can be shown \citep{bcd2} that $I_W$ has some near-optimal length properties among intervals with nominal coverage $1-\alpha$. $I_W$ is therefore the natural benchmark for new confidence intervals.

The main drawback of $I_W$ is that its coverages oscillates too much for fixed $n$ and $p$ close to 0 or 1. Recently, \citet{yg1} proposed a small modification of the interval that solves this problem. The improved coverage comes at the cost of a slightly wider interval. As the results of our comparison would not change qualitatively if the Guan interval were to be used instead of $I_W$, we stick to the more familiar unmodified version.

\emph{The Jeffreys prior interval. } Let $X\sim Bin(n,p)$ and let $p$ have prior distribution $Beta(r,s)$. Then the posterior distribution is $Beta(X+r,n-X+s)$ and letting $B(\alpha,r,s)$ denote the $\alpha$-quantile of the $Beta(r,s)$ distribution, a $100(1-\alpha) \%$ Bayesian interval is
\[
\Big{(}B(\alpha/2,X+r,n-X+s),\quad  B(1-\alpha/2,X+r,n-X+s)\Big{)}.
\]
\citet{pa1} used the uniform prior $r=s=1$ in their comparison, whereas \citet{bcd1} used the Jeffreys prior $r=s=1/2$. The difference between the two intervals is small. We use the latter and denote it $I_J$.

$I_J$ has performance close to that of the $I_W$, and is often prefered when $p$ is believed to be close to 0 or 1.

\subsection{Numerical comparison}\label{exactcomp}
In Figure \ref{figCov1} the actual coverages of some confidence intervals with nominal coverage $95 \%$ are shown for $p\in( 0,0.5\rbrack$ when $n=25$. All intervals are equivariant in the sense that the coverage is the same for $p$ and $1-p$. Note that the coverages have been computed exactly (up to machine epsilon) and thus haven't been obtained by simulation. 

The Wilson interval has fairly good coverage properties when $p$ isn't close to 0, in which case it oscillates wildly. The Jeffreys prior interval has similar coverage, except for one big dip for a moderately sized $p$. The coverage-adjusted intervals tend to have good coverage properties in the areas dictated by the prior distribution used for the adjustment. Some of the prior corrected intervals suffer from either undercoverage or overcoverage for the parts of the parameter space that $f$ put low weight on.

The expected length of the intervals are shown in Figure \ref{figLen1}. In many cases, the corrected intervals have shorter expected length than the Wilson and Jeffreys prior intervals. For some intervals, this is due to undercoverage caused by $f$ putting more weight on a different part of the parameter space, but in some cases it is a consequence of a succesful correction.

In order to compare intervals over the entire parameter space for different values of $n$, we use heatmap-type plots in Figures \ref{figExp1}-\ref{figExp6}. Studying the plots for coverage and expected length at the same time gives us a good way of comparing the intervals. The heatmap-type plots, combined with more traditional plots such as those in Figures \ref{figCov1}-\ref{figLen1}, give a more complete comparison of different intervals than what has previously been possible.

The Wilson interval is compared to the prior corrected Clopper-Pearson $Beta(1,1)$ interval in Figures \ref{figExp1}-\ref{figExp2} and to the posterior corrected Clopper-Pearson $Beta(0.5,0.5)$ interval in Figures \ref{figExp3}-\ref{figExp4}. The corrected intervals simultaneously offer greater coverage and shorter expected length for small $p$. The difference is larger for small $n$ and is particularly noticeable at the 99 \% confidence level.

In the comparison between the Jeffreys prior interval and the posterior corrected Clopper-Pearson $Beta(0.5,0.5)$ interval in Figures \ref{figExp5}-\ref{figExp6}, the corrected interval is found to offer at least as short intervals with the same actual coverage as the Jeffreys prior interval.


\section{Discussion}\label{discussion}
\subsection{Conclusions}
We introduced coverage-adjusted Clopper-Pearson intervals, where the intervals are adjusted to give mean coverage $1-\alpha$ with respect to either a prior or posterior distribution of $p$. We investigated the properties of several such intervals. The numerical results were presented graphically, partially with new heatmap-type plots.

In the comparison with the benchmark Wilson and Jeffreys prior intervals, we found the coverage-adjusted Clopper-Pearson intervals to be preferable if $p$ is believed to be close to 0 or 1, as these intervals have both better coverage and shorter expected length in this setting. We have thus seen that it is possible to improve upon the Wilson and Jeffreys prior intervals for $p$ close to 0 or 1, if we are willing to accept that we use intervals that may have bad coverage properties in regions of the parameter space that are far from where our prior information indicates that $p$ is.

In conclusion, the coverage-adjusted Clopper-Pearson intervals seem to be strong competitors against other methods for constructing confidence intervals for small binomial proportions. For $p$ closer to 0.5, the Wilson interval seems to be preferable.

\subsection{Further developments}
The extension of the ideas presented here to one-sided intervals and to other distributions, such as the Poisson and negative binomial distributions is straightforward. Likewise, it should be possible to apply such corrections to tests about the difference of two binomial proportions. It remains to be seen whether the corrections yield intervals with interesting properties in these cases as well.

Apart from mean coverage, several other conditions can be used to ensure that the coverage is close to $1-\alpha$ on average. Examples include median coverage conditions, minimum mean squared coverage error conditions and minimum absolute coverage error conditions.

\subsubsection*{Acknowledgements}
The author wishes to thank an anonymous reviewer and Silvelyn Zwanzig for several helpful comments and Sven Erick Alm, who proposed the idea of a posterior correction. All figures were produced using R.

\pagebreak
\appendix
\section{Appendix}

\subsection{Continuity and monotonicity of the mean coverage}
\begin{proof}[Proof of Lemma \ref{app1}]
Changing the order of summation of integration, the mean coverage can be rewritten as
\[\begin{split}
C(\alpha,n,r,s)&=\sum_{X=0}^n\binom{n}{X}\int_0^1\bfm{1}(p\in I_{CP}(X,\alpha))p^X(1-p)^{n-X}f(p,r,s)dp\\
&= \sum_{X=0}^n\binom{n}{X}\int_{p_L(X,\alpha)}^{p_U(X,\alpha)}p^X(1-p)^{n-X}f(p,r,s)dp.
\end{split}\]
Since sums of continuous functions are continuous, it suffices to show that
\begin{equation}\label{cont1}
\int_{p_L(X,\alpha)}^{p_U(X,\alpha)}p^X(1-p)^{n-X}f(p,r,s)dp
\end{equation}
is continuous for fixed $X$, $n$, $r$ and $s$. Seeing as $f(p)=p^{r-1}(1-p)^{s-1}/\beta(r,s)$, this definite integral is a polynomial in $p_L(X,\alpha)$ and $p_U(X,\alpha)$. Since the quantile functions of the beta distributions, and thus the limits of integration, are continuous in $\alpha$, the continuity of $C(\alpha,n,r,s)$ in $\alpha$ follows.

Similarly, as $p_L(X,\alpha)$ is strictly increasing in $\alpha$ and $p_U(X,\alpha)$ is strictly decreasing in $\alpha$, and since $p^X(1-p)^{n-X}f(p,r,s)\geq0$ for all $p$, the definite integral (\ref{cont1}) is strictly decreasing in $\alpha$. $C(\alpha,n,r,s)$ is the sum of $n+1$ strictly decreasing functions and thus also strictly decreasing.
\end{proof}

\subsection{Tables}
For a given prior distribution, $\alpha'$ is easily computed numerically given $n$, $\alpha$ and, in the case of a posterior correction, $X$. We give a table of $\alpha'$ for the prior corrected Clopper-Pearson $Beta(1,1)$ interval as an example.
\begin{table}[H]
\begin{center}
\caption{$\alpha'$ for the prior corrected Clopper-Pearson $Beta(1,1)$ interval.}\label{alfatabell}\
\begin{tabular}{|c| c c | c| c c | c| c c | }
\hline
$n$ &  $\alpha=0.05$ & $\alpha=0.01$ & $n$ &  $\alpha=0.05$ & $\alpha=0.01$ & $n$ &  $\alpha=0.05$ & $\alpha=0.01$ \\ \hline
5 & 0.1772 & 0.0516& 55&0.0769 &0.0171 & 110& 0.0682&0.0147 \\
10 &0.1280 & 0.0331& 60&0.0756 &0.0167 & 120&0.0674 &0.0145 \\
15 &0.1095 & 0.0269& 65&0.0745 &0.0164 & 130&0.0666  &0.0143 \\
20 &0.0995 & 0.0237& 70&0.0735 &0.0161 & 140&0.0660 &0.0141\\
25 &0.0931 & 0.0218& 75&0.0726 &0.0159 & 150&0.0654 &0.0139 \\
30 &0.0885 & 0.0204& 80&0.0718 &0.0156 & 160& 0.0649& 0.0138 \\
35 &0.0851 & 0.0194& 85&0.0710 &0.0154 & 170& 0.0644&0.0137 \\
40 &0.0825 & 0.0186& 90&0.0704 &0.0153 & 180&0.0640 & 0.0135\\
45 &0.0803 & 0.0180& 95&0.0698 &0.0151 & 190&0.0636 &0.0134 \\
50 &0.0785 & 0.0175& 100&0.0692 &0.0150 & 200&0.0632 & 0.0133\\
\hline
\end{tabular}\\
\end{center}
\end{table}

Next, we give some examples of the uncorrected, prior corrected $Beta(1,1)$ and posterior corrected $Beta(1/2,1/2)$ Clopper-Pearson intervals for a few combinations of $n$, $X$ and $\alpha$. The posterior corrected $Beta(1/2,1/2)$ tends to get a larger correction, and thus shorter intervals, than the prior corrected $Beta(1,1)$ interval if the observed $X$ is close to $0$ or $n$ and a smaller correction if $X$ is close to $n/2$.

\begin{table}[H]
\begin{center}
\caption{Some examples of coverage-adjusted Clopper-Pearson intervals.}\label{alfatabell2}\
\begin{tabular}{c|c|cc|cc|cc|}
\cline{3-8}
\multicolumn{2}{c|}{}&\multicolumn{2}{c|}{No correction}&\multicolumn{2}{c|}{Prior $Beta(1,1)$}&\multicolumn{2}{c|}{Posterior $Beta(\frac{1}{2},\frac{1}{2})$} \\
\hline
\multicolumn{1}{|c|}{$n$} & $X$  & $\alpha=0.05$ & $\alpha=0.01$ & $\alpha=0.05$ & $\alpha=0.01$ & $\alpha=0.05$ & $\alpha=0.01$\\ \hline
\multicolumn{1}{|c|}{$20$} & $1$ & $(0.0012,$ & $(0.0003,$& $(0.0025,$ & $(0.0006,$ & $(0.0042,$ & $(0.0010,$ \\
\multicolumn{1}{|c|}{}&  & $~0.2487)$ & $~0.3171)$& $~0.2163)$ & $~0.2815)$ & $~0.1925)$ & $~0.2591)$ \\[1.25mm]
\multicolumn{1}{|c|}{}& $2$ & $(0.0123,$ & $(0.0053,$& $(0.0180,$ & $(0.0083,$ & $(0.0207,$ & $(0.0099,$ \\
\multicolumn{1}{|c|}{}&  & $~0.3170)$ & $~0.3871)$& $~0.2829)$ & $~0.3509)$ & $~0.2697)$ & $~0.3364)$ \\[1.25mm]
\multicolumn{1}{|c|}{}& $5$ & $(0.0866,$ & $(0.0583,$& $(0.1039,$ & $(0.0718,$ & $(0.1013,$ & $(0.0705,$ \\
\multicolumn{1}{|c|}{}&  & $~0.4910)$ & $~0.5598)$& $~0.4559)$ & $~0.5248)$ & $~0.4608)$ & $~0.5281)$ \\[1.25mm]
\multicolumn{1}{|c|}{}& $10$ & $(0.2720,$ & $(0.2177,$& $(0.3017,$ & $(0.2447,$ & $(0.2929,$ & $(0.2369,$ \\
\multicolumn{1}{|c|}{}&  & $~0.7280)$ & $~0.7823)$& $~0.6983)$ & $~0.7553)$ & $~0.7071)$ & $~0.7631)$ \\[1.25mm]
  \hline
  \multicolumn{1}{|c|}{$50$} & $1$ & $(0.0005,$ & $(0.0001,$& $(0.0008,$ & $(0.0002,$ & $(0.0016,$ & $(0.0004,$ \\
  \multicolumn{1}{|c|}{}&  & $~0.1065)$ & $~0.1394)$& $~0.0967)$ & $~0.1282)$ & $~0.0812)$ & $~0.1122)$ \\[1.25mm]
\multicolumn{1}{|c|}{}& $5$ & $(0.0333,$ & $(0.022,$& $(0.0376,$ & $(0.0255,$ & $(0.0387,$ & $(0.0267,$ \\
\multicolumn{1}{|c|}{}&  & $~0.2181)$ & $~0.2580)$& $~0.2058)$ & $~0.2448)$ & $~0.2027)$ & $~0.2402)$ \\[1.25mm]
\multicolumn{1}{|c|}{}& $12$ & $(0.1306,$ & $(0.1056,$& $(0.1395,$ & $(0.1133,$ & $(0.1378,$ & $(0.1120,$ \\
\multicolumn{1}{|c|}{}&  & $~0.3817)$ & $~0.4255)$& $~0.3676)$ & $~0.4112)$ & $~0.3702)$ & $~0.4136)$ \\[1.25mm]
\multicolumn{1}{|c|}{}& $25$ & $(0.3553,$ & $(0.3155,$& $(0.3686,$ & $(0.3282,$ & $(0.3644,$ & $(0.3242,$ \\
\multicolumn{1}{|c|}{}&  & $~0.6447)$ & $~0.6845)$& $~0.6314)$ & $~0.6718)$ & $~0.6356)$ & $~0.6757)$ \\[1.25mm]
  \hline
   \multicolumn{1}{|c|}{$100$} & $1$ & $(0.0003,$ & $(0.0001,$& $(0.0004,$ & $(0.0001,$ & $(0.0008,$ & $(0.0002,$ \\
\multicolumn{1}{|c|}{}&  & $~0.0545)$ & $~0.0720)$& $~0.0508)$ & $~0.0677)$ & $~0.0413)$ & $~0.0577)$ \\[1.25mm]
\multicolumn{1}{|c|}{}& $10$ & $(0.0490,$ & $(0.0382,$& $(0.0518,$ & $(0.0405,$ & $(0.0523,$ & $(0.0410,$ \\
\multicolumn{1}{|c|}{}&  & $~0.1762)$ & $~0.2020)$& $~0.1705)$ & $~0.1959)$ & $~0.1695)$ & $~0.1946)$ \\[1.25mm]
\multicolumn{1}{|c|}{}& $25$ & $(0.1688,$ & $(0.1477,$& $(0.1739,$ & $(0.1525,$ & $(0.1728,$ & $(0.1515,$ \\
\multicolumn{1}{|c|}{}&  & $~0.3466)$ & $~0.3769)$& $~0.3396)$ & $~0.3698)$ & $~0.3410)$ & $~0.3712)$ \\[1.25mm]
\multicolumn{1}{|c|}{}& $50$ & $(0.3983,$ & $(0.3689,$& $(0.4052,$ & $(0.3756,$ & $(0.4031,$ & $(0.3735,$ \\
\multicolumn{1}{|c|}{}&  & $~0.6017)$ & $~0.6311)$& $~0.4948)$ & $~0.6244)$ & $~0.5969)$ & $~0.6265)$ \\[1.25mm]
  \hline

\end{tabular}
\end{center}
\end{table}

\pagebreak

\subsection{Figures}

\pagestyle{empty}

\begin{figure}[H]
\begin{center}
\includegraphics[width=\textwidth]{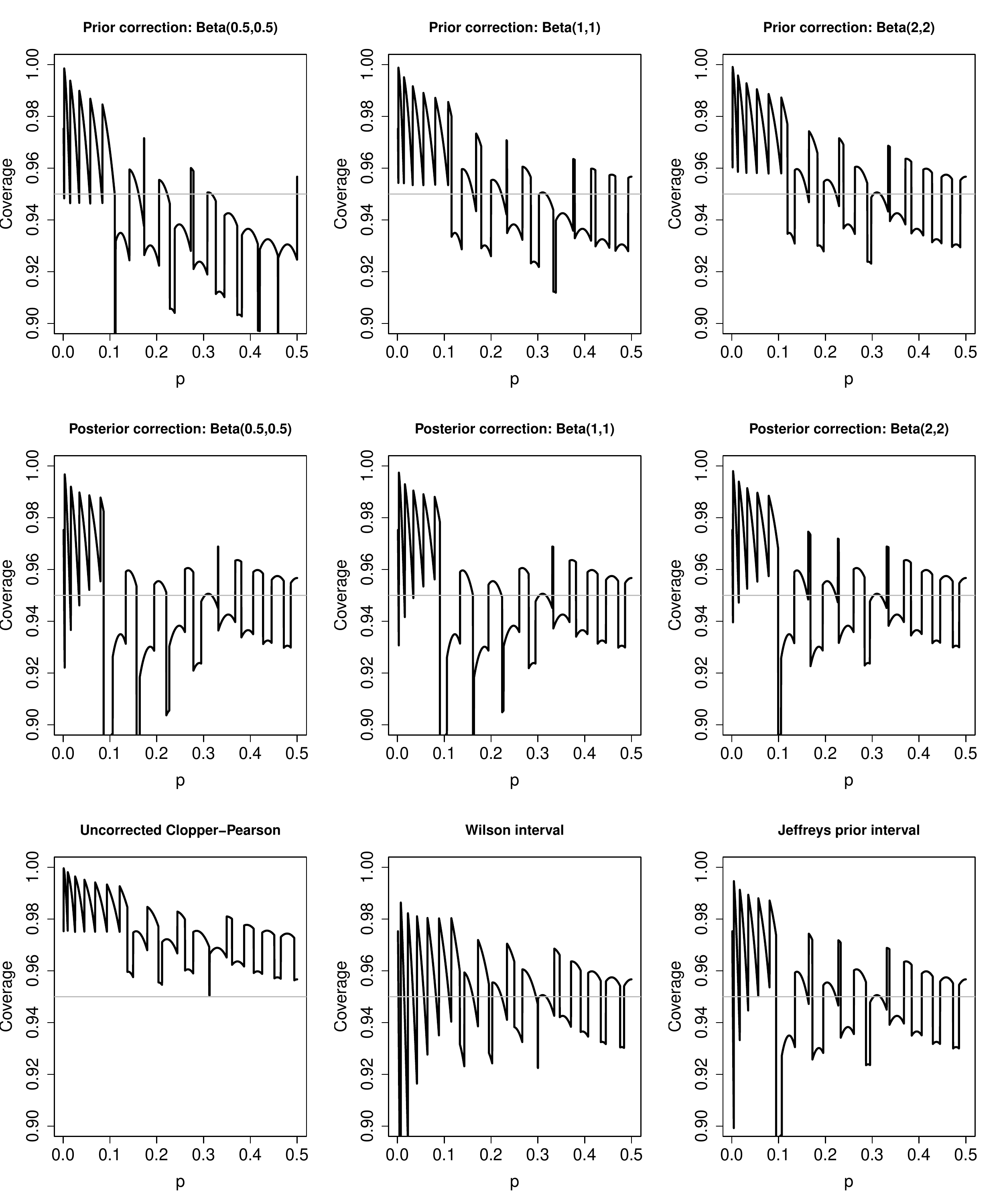}
   \caption{Coverage of nominal 95 \% intervals for $n=25$.}\label{figCov1}
\end{center}
\end{figure}

\begin{figure}[H]
\begin{center}
\includegraphics[width=\textwidth]{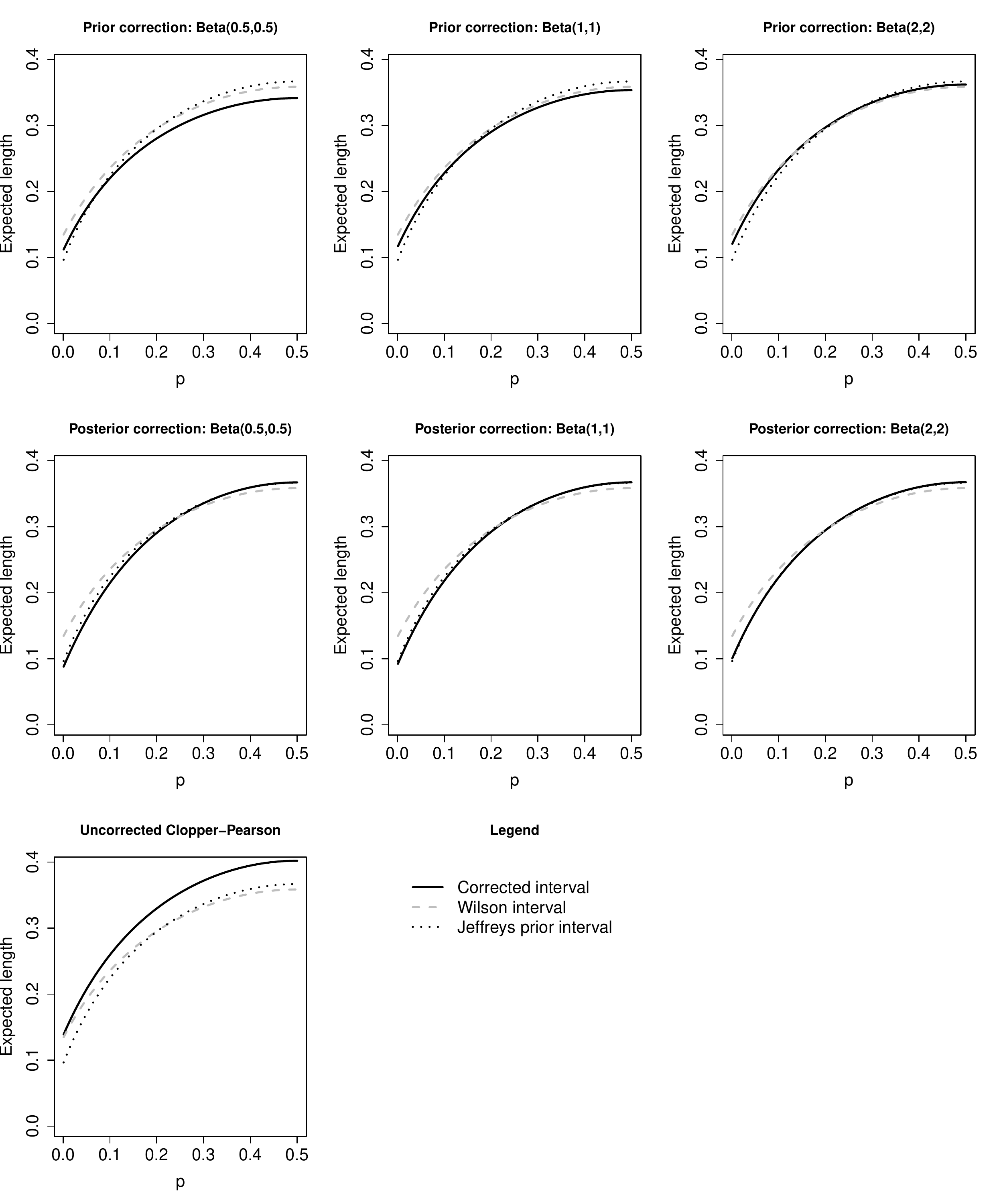}
   \caption{Expected length of nominal 95 \% intervals for $n=25$.}\label{figLen1}
\end{center}
\end{figure}

%
%
%
%
%
%
%

%

\begin{figure}[H]
\begin{center}
\includegraphics[width=\textwidth]{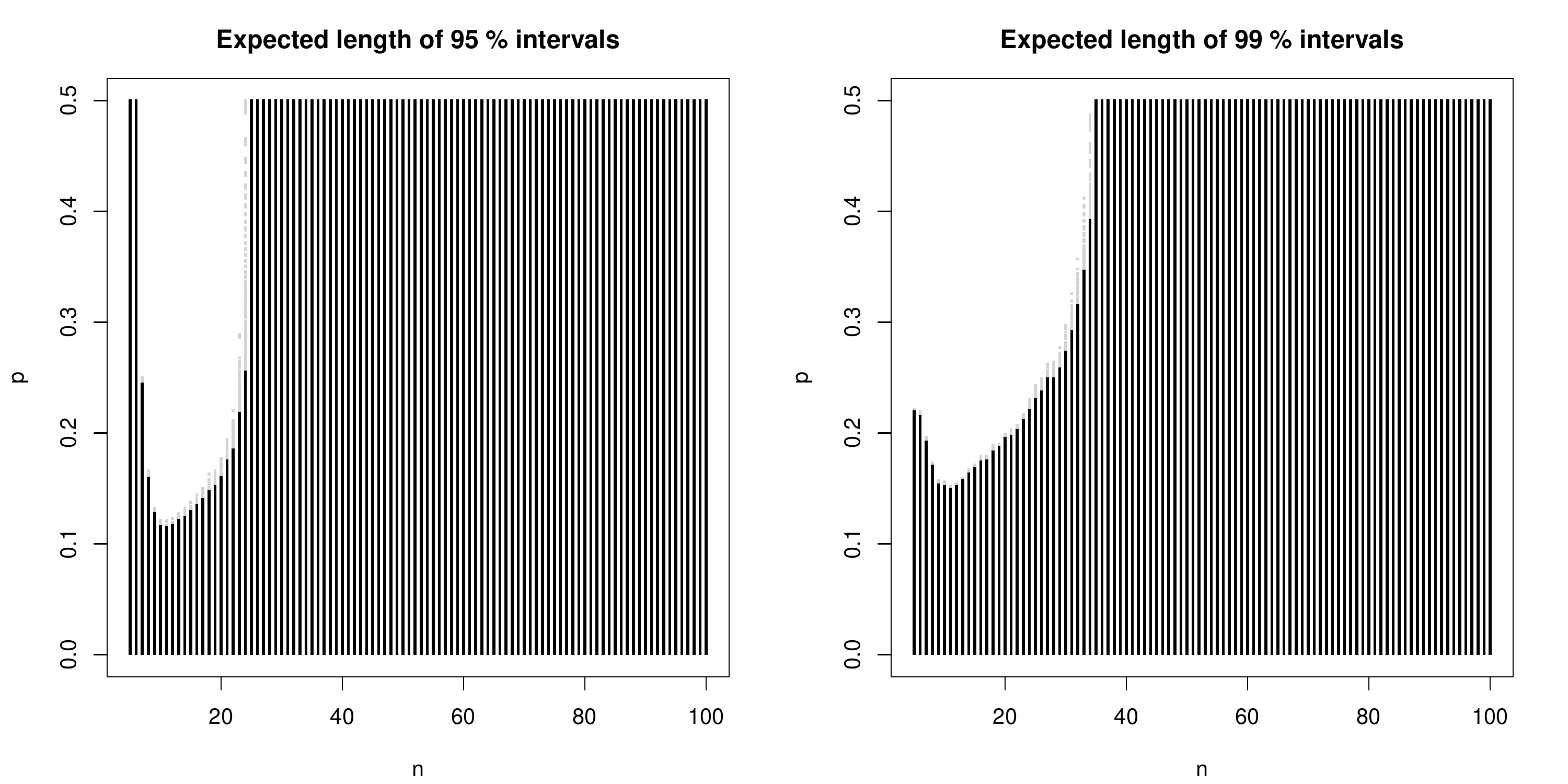}
   \caption{Comparison of the expected length of the Wilson and prior corrected Clopper-Pearson $Beta(1,1)$ intervals for different $n$ and $p$. In the black points, the prior corrected interval has shorter expected length. In the grey points, the intervals have equal expected lengths. Based on a grid of 500 equidistant values of $p$ in $( 0.001,0.5)$.}\label{figExp1}
   \vspace{3mm}
\includegraphics[width=\textwidth]{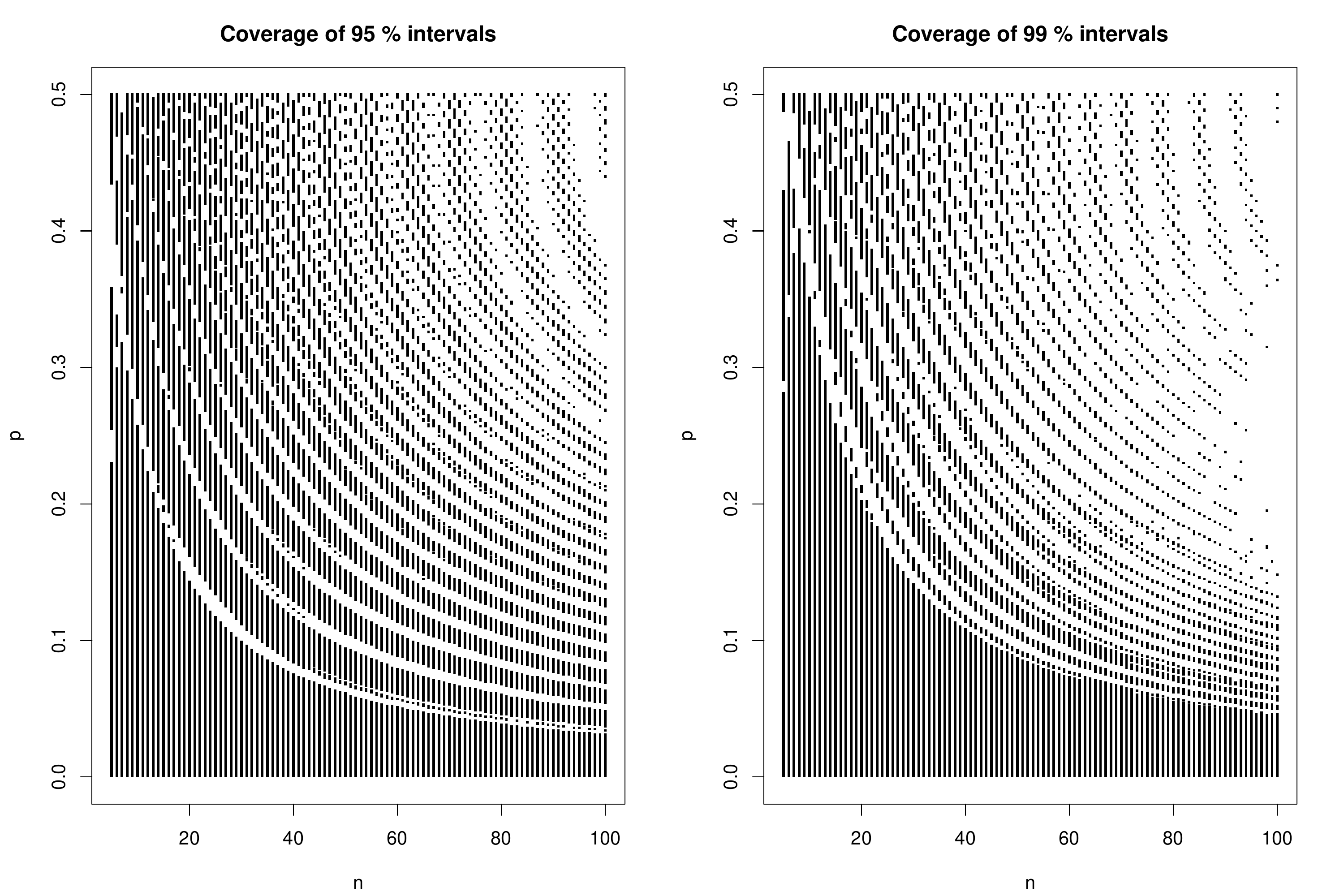}
   \caption{Comparison of the coverage of the Wilson and prior corrected Clopper-Pearson $Beta(1,1)$ intervals for different $n$ and $p$. In the black points the prior corrected interval has greater coverage, in the white points the Wilson interval has greater coverage and in the grey points the intervals have equal coverage (when rounded to 3 decimal places).}\label{figExp2}
\end{center}
\end{figure}

\begin{figure}[H]
\begin{center}
\includegraphics[width=\textwidth]{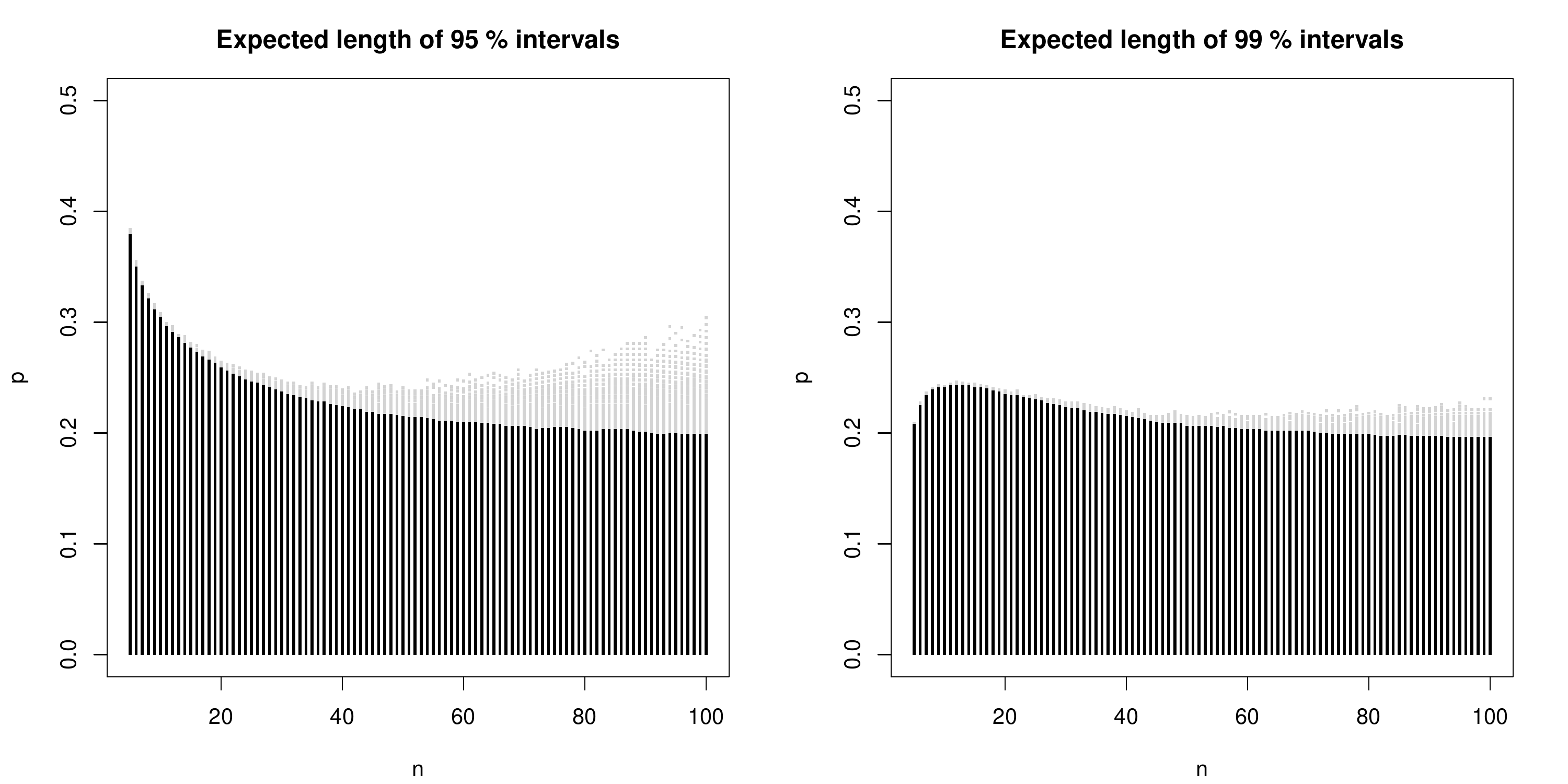}
   \caption{Comparison of the expected length of the Wilson and posterior corrected Clopper-Pearson $Beta(0.5,0.5)$ interval intervals for different $n$ and $p$. In the black points, the posterior corrected interval has shorter expected length. In the grey points, the intervals have equal expected lengths.}\label{figExp3}
      \vspace{3mm}
      \includegraphics[width=\textwidth]{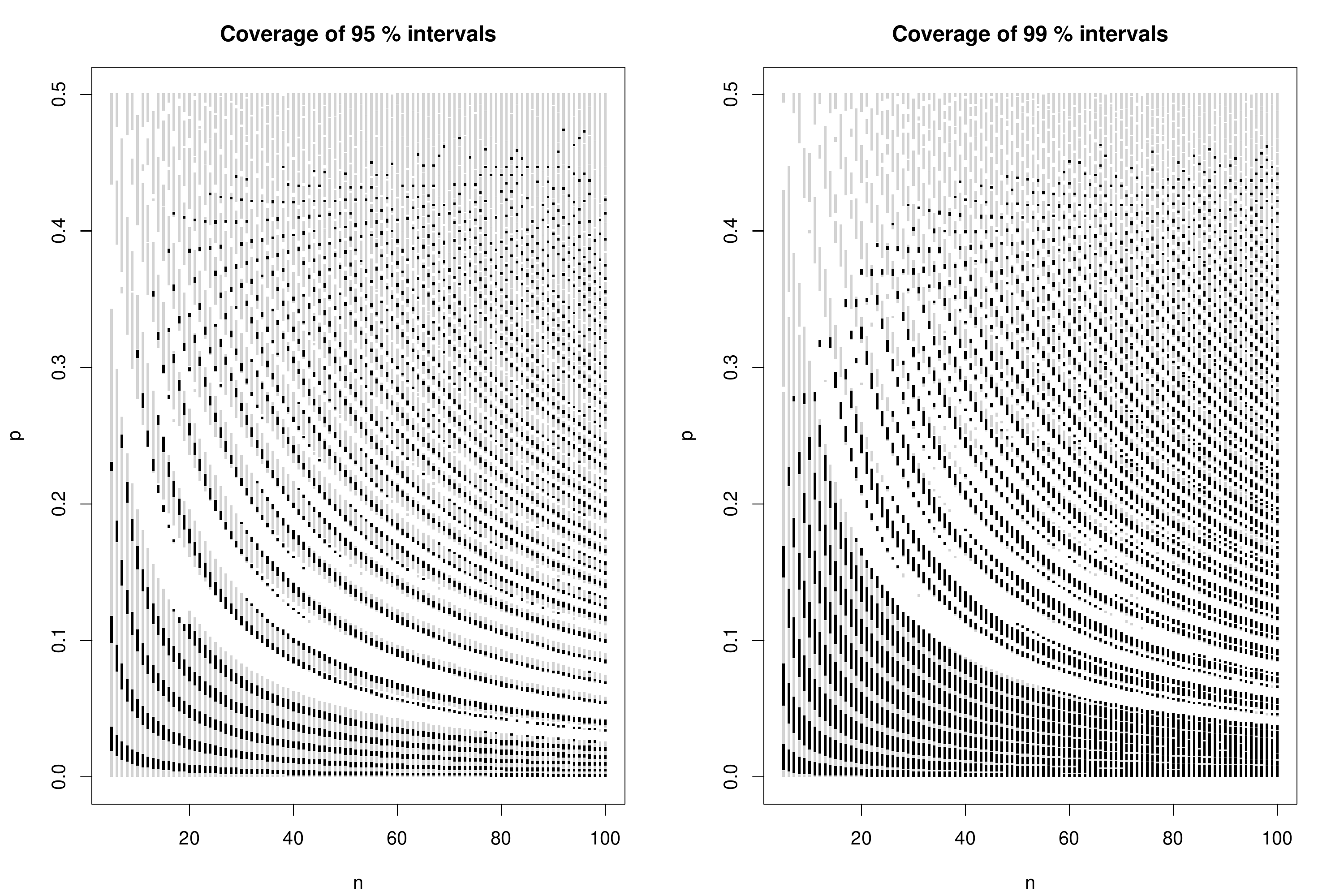}
   \caption{Comparison of the coverage of the Wilson and posterior corrected Clopper-Pearson $Beta(0.5,0.5)$ intervals for different $n$ and $p$. In the black points the posterior corrected interval has greater coverage, in the white points the Wilson interval has greater coverage and in the grey points the intervals have equal coverage (when rounded to 3 decimal places).}\label{figExp4}

\end{center}
\end{figure}

\begin{figure}[H]
\begin{center}
   \includegraphics[width=\textwidth]{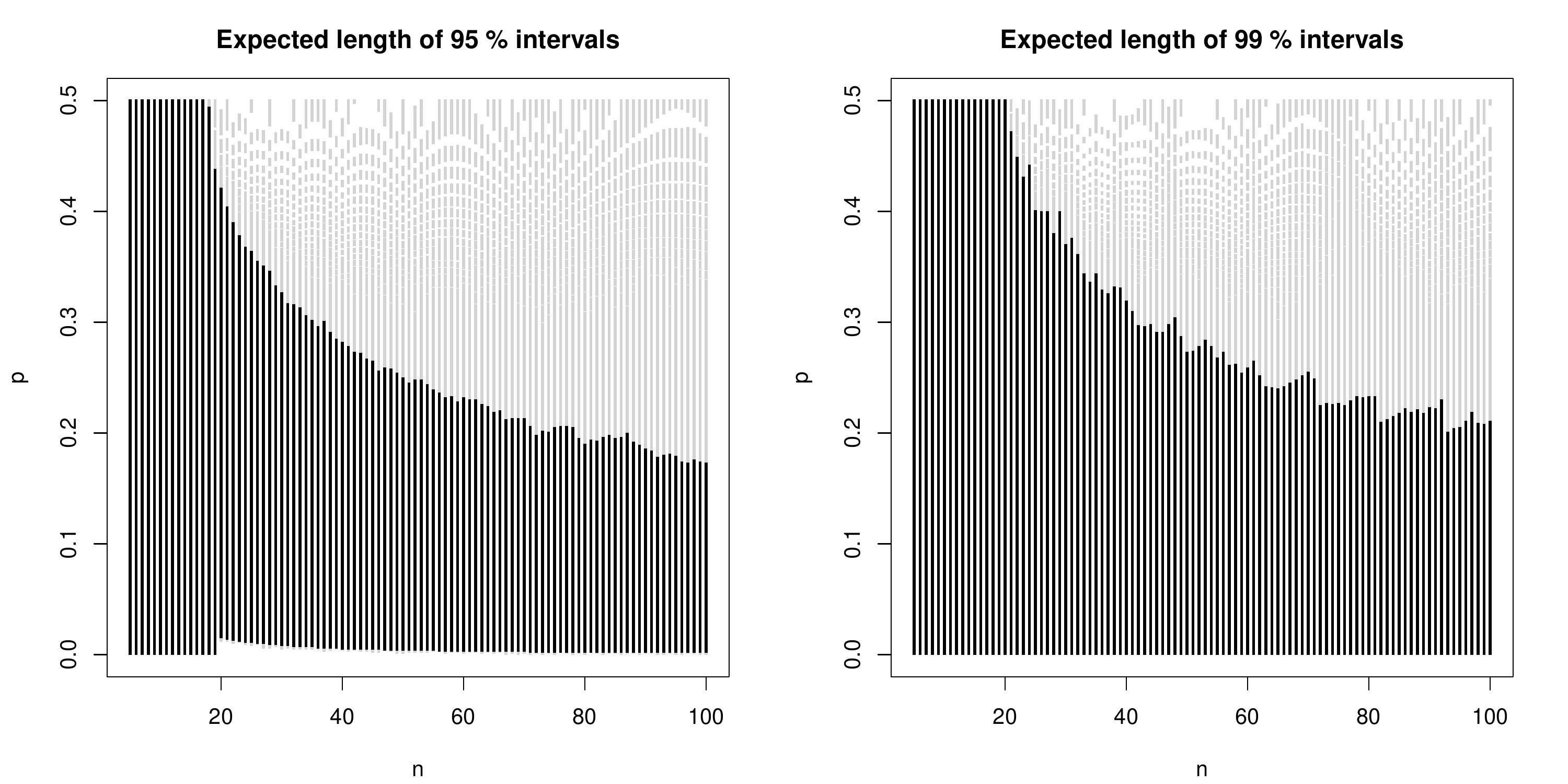}
   \caption{Comparison of the expected length of the Bayesian Jeffreys prior and posterior corrected Clopper-Pearson $Beta(0.5,0.5)$ intervals for different $n$ and $p$. In the black points, the posterior corrected interval has shorter expected length. In the grey points, the intervals have equal expected lengths.}\label{figExp5}
      \vspace{3mm}
   \includegraphics[width=\textwidth]{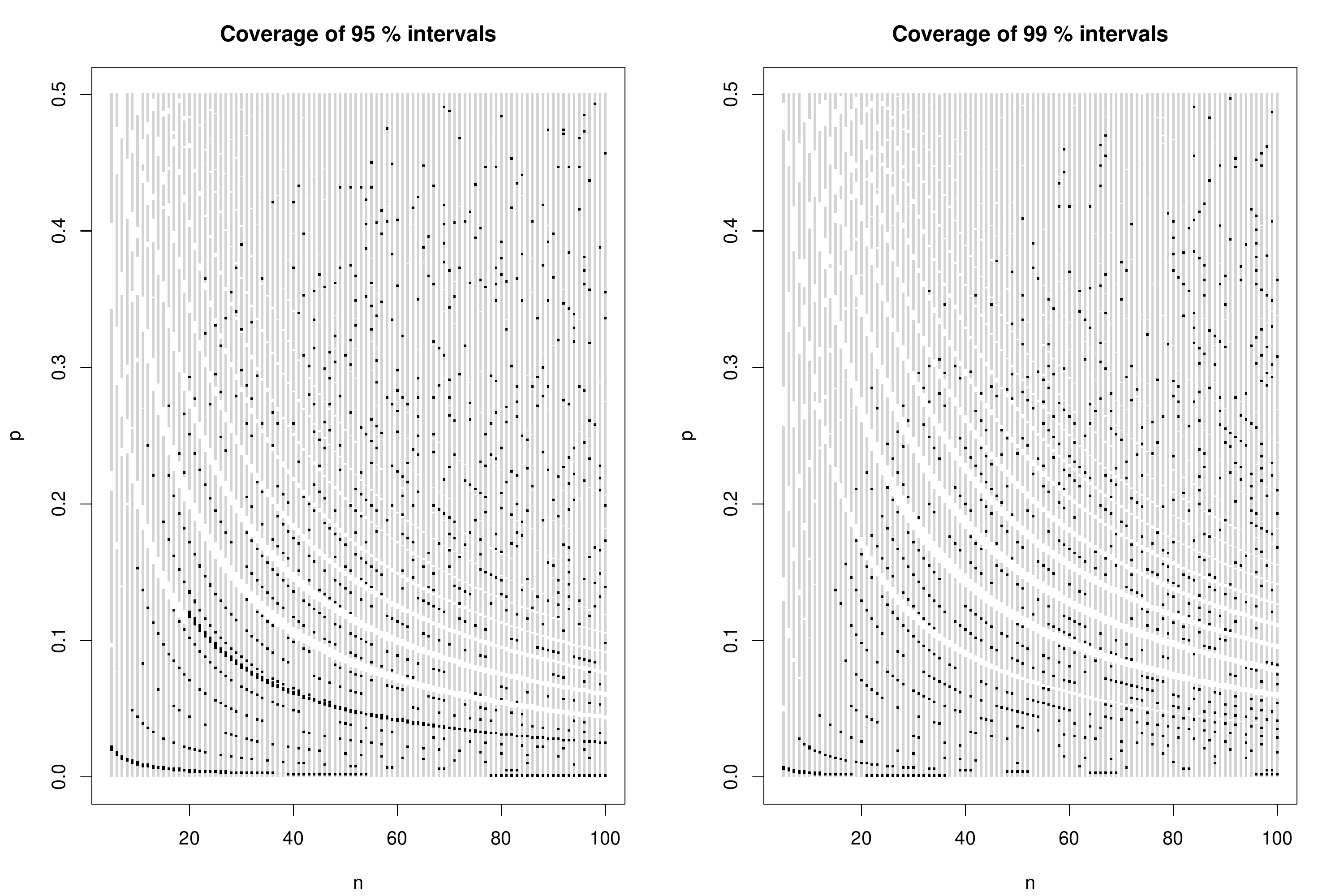}
   \caption{Comparison of the coverage of the Bayesian Jeffreys prior and posterior corrected Clopper-Pearson $Beta(0.5,0.5)$ intervals for different $n$ and $p$. In the black points the posterior corrected interval has greater coverage, in the white points the Bayesian Jeffreys prior interval has greater coverage and in the grey points the intervals have equal coverage (when rounded to 3 decimal places).}\label{figExp6}
\end{center}
\end{figure}

\pagebreak
\pagestyle{plain}

\end{document}